\definecolor{gold}{rgb}{0.85,.66,0}
\def\munderbar#1{\underline{\sbox\tw@{$#1$}\dp\tw@\z@\box\tw@}}
\newtheorem{theorem}{Theorem}
\newcommand{\eye}[1]{\mathbf{I}_{#1}}
\DeclareMathOperator*{\argmin}{arg\,min}
\newcommand{\ltwonorm}[1]{\lVert#1\rVert^{2}_{2}}
\begin{document}

%\singlespacing

% *** Start of Title ***

% Paper Title Definition
\newcommand{\papertitle}{
Decentralized Design of Fast Iterative Receivers for Massive and Extreme-Large MIMO Systems
}

%Decentralized \ta{Fast Iterative  Receivers} Design %\st{of Receivers} 
%for \ta{Spatial Non-Stationary} Massive MIMO Systems\\
%\colr{or}\\
%\ta{with Spatial Non-Stationarities}

\title{\papertitle}

\author{
    {{Victor Croisfelt}},
    {{Taufik Abrão}},
    {{Abolfazl Amiri}},
    {{Elisabeth de Carvalho}}, 
    {{Petar Popovski}}
    \thanks{V. Croisfelt is with the Electrical Engineering Department, Universidade de S\~{a}o Paulo, Escola Politécnica, São Paulo, Brazil; \texttt{victorcroisfelt@usp.br}.}%
    \thanks{{T. Abrão is with the Electrical Engineering Department, State University of Londrina, PR, Brazil.  E-mail: \texttt{taufik@uel.br}}}%
    \thanks{A. Amiri, E. de Carvalho, and P. Popovski are with the Department of Electronic Systems, Technical Faculty of IT and Design; Aalborg University,	Denmark; \texttt{\{aba; edc; petarp\}@es.aau.dk}.}%
}

% The paper headers
%\markboth{}
%{Shell \MakeLowercase{\textit{et al.}}: Bare Demo of IEEEtran.cls for IEEE Communications Society Journals}}

\maketitle
% *** End of Title ***

\begin{abstract} 
Despite the extensive use of a centralized approach to design receivers at the base station for massive multiple-input multiple-output (M-MIMO) systems, their actual implementation is a major challenge due to several bottlenecks imposed by the large number of antennas. One way to deal with this problem is by fully decentralizing the classic zero-forcing receiver across multiple processing nodes based on the gradient descent method. In this paper, we first explicitly relate this decentralized receiver to a distributed version of the Kaczmarz algorithm and to the use of the successive interference cancellation (SIC) philosophy to mitigate the residual across nodes. In addition, we propose two methods to further accelerate the initial convergence of these iterative decentralized receivers by exploring the connection with the Kaczmarz algorithm: 1) a new Bayesian distributed receiver, which can eliminate noise on an iteration basis; 2) a more practical method for choosing the relaxation parameter. The discussion also consider spatial non-stationarities that arise when the antenna arrays are extremely large (XL-MIMO). We were able to improve the numerical results for both spatially stationary and non-stationary channels, but mainly the non-stationary performance can still be improved compared to the centralized ZF receiver. Future research directions are provided with the aim of further improving the applicability of the receiver based on the principle of successive residual cancellation (SRC). %for both spatially stationary and non-stationary channels.
\end{abstract}

\begin{IEEEkeywords}
     M-MIMO, XL-MIMO, fully decentralized receivers, Kaczmarz algorithm, successive residual cancellation (SRC), daisy-chain architecture, Bayesian receiver, relaxation method.
\end{IEEEkeywords}

% For peer review papers, you can put extra information on the cover
% page as needed:
% \ifCLASSOPTIONpeerreview
% \begin{center} \bfseries EDICS Category: 3-BBND \end{center}
% \fi
%
% For peerreview papers, this IEEEtran command inserts a page break and
% creates the second title. It will be ignored for other modes.
\IEEEpeerreviewmaketitle

%============================
\section{Introduction}\label{sec:intro}
%============================
%Paragraph 1-2. Present the problems
A key problem in the hardware implementation of massive multiple-input multiple-output (M-MIMO) systems is the centralization of processing tasks at the base station (BS) \cite{Sanchez2020,Shepard2012}. Such an approach is limited by a tremendous amount of information that must be transferred internally from the modules of the large number of antennas to the \emph{centralized processing unit} (CPU). The latter 
often manages all the processing of the baseband signals. A natural solution to this problem is to decentralize the signal reception tasks \cite{Shepard2012}, implementing more nodes called \emph{remote processing units} (RPUs) that are capable of handling received signals as close as possible to the antennas. While this reduces the amount of exchanging information, other challenges emerge. Most classic multi-antenna receiver designs that perform multiuser detection cannot be easy decentralized because the symbol estimations depend on all information being gathered in the CPU. In this work, we consider the problem of designing \emph{fully} decentralized receivers in which the local channel state information (CSI) of RPUs is not allowed to be shared.

As the antenna arrays become extremely large (XL), a new channel effect also challenges the functionality of the receivers. The appearance of the so-called \emph{spatially non-stationary channels} consists of the fact that only part of the antenna array is containing most of the users' equipment (UE) energy due to the augmented array size \cite{Carvalho2020}. In order to allow even more scalability of spatial resolution of the XL-MIMO systems, we are also interested in designing receivers which are aware of such spatial non-stationarities.

%Paragraph 3-4. Related work
Many recent publications address the idea of decentralizing the baseband processing in the BS for the M-MIMO systems \cite{Li2019}. A partially decentralized receiver was considered in \cite{Jeon2019}. %This method works by summing up the local information of the RPUs in order to obtain the channel Gramian matrix and invert it in the CPU.
However, a clear drawback of a partially decentralized method is that it requires a lot of exchange of information between the RPUs and the orchestration of the process by the CPU. Furthermore, the channel Gramian matrix has to be inverted in the CPU, with a complexity growth of the third order of the number of UEs. These disadvantages motivated the search for fully decentralized algorithms. %, such as the one presented in \cite{Sanchez2020}.
The method proposed in \cite{Li2017} is fully decentralized, but suffers from the exchange of consensus information between the RPUs.
The authors of \cite{Sanchez2020} proposed an approach to fully decentralize the classic zero-forcing (ZF) design using the gradient descent (GD) method. Their scheme is based on a serial architecture that gives rise to daisy-chained RPUs. Due to the sequential property of the architecture, the \emph{GD receiver} from \cite{Sanchez2020} must perform several iterations over the RPUs to obtain a performance close to that of the centralized ZF, thus having a considerable high processing delay.

In the context of XL-MIMO systems, the authors of \cite{Amiri2021} combine the variational message-passing (MP) and the belief-propagation (BP) frameworks to propose a fully decentralized scheme called the \emph{MP-BF receiver}. Their method takes into account the spatial non-stationarities. Despite the good results, the method is more complex than the GD receiver. We believe that methods with such good performance and less complexity than the MP-BF receiver can be found for the XL-MIMO case.

%Paragraph 5. Contributions
\noindent \textbf{Contributions.} In this paper, we start by showing two alternative approaches to obtain the GD receiver from \cite{Sanchez2020}. These methods are based on the decentralized version of the Kaczmarz algorithm that was recently proposed in \cite{Hegde2019} and the philosophy behind successive interference cancellation (SIC) \cite{Verdu1998,Carvalho2012}. Because of the former, we will refer to the GD receiver of \cite{Sanchez2020} as the \emph{standard distributed Kaczmarz (SDK) receiver}, which are based on what is called the principle of \emph{successive residual cancellation} (SRC). We then make use of the connection with the Kaczmarz algorithm to present two ways to further improve the initial convergence of the SDK receiver. First, we propose a new receiver that decentralizes the classic ZF receiver based on the Bayesian perspective. This scheme has the advantage of estimating and eliminating noise between nodes as the iterations along the nodes are performed. %, improving the convergence and stability of the receiver. 
Second, we propose a more practical method to choose the relaxation parameter of the SDK receiver than the one proposed in \cite{Sanchez2020}. Our method is based on a heuristic principle that tries to overcome the fact that the optimization of the relaxation parameter is non-trivial. Yet, it is more flexible and performs better than the previous form from \cite{Sanchez2020}. Throughout the paper, we will offer guidelines for future research that can be explored to further improve the viability of receivers based on the SRC philosophy in practice.
%\noindent \textbf{Contributions.} Here, we build upon the idea of the GD receiver from \cite{Sanchez2020} and its use of the daisy-chain baseband processing architecture. We start by showing that the GD receiver is equivalent to the application of the decentralized version of the Kaczmarz algorithm that was recently proposed in \cite{Hegde2019}.  In sequence, we discuss that these iterative receivers can be seen as a generalization of the successive interference cancellation (SIC) philosophy applied in the context of combining information between the RPUs.This includes the discussion of a new Bayesian distributed receiver and a more practical method for choosing a suitable relaxation parameter for the Kaczmarz iterative step. Acceleration techniques for the case of non-stationary channels are discussed as well. Our numerical results outperform the GD receiver from \cite{Sanchez2020} under either spatially stationary and non-stationary channels.

\noindent \textbf{Notations.}
Non-boldface letters denote scalars. Upper and lower case boldface are used for matrices and vectors. Discrete and continuous sets are given by calligraphic $\mathcal{X}$ and blackboard bold $\mathbb{X}$. The $n$-th entry of $\mathbf{x}$ is $x_n$. Matrix concatenation is $[\mathbf{X},\mathbf{Y}]$ for horizontal and $[\mathbf{X};\mathbf{Y}]$ for vertical. Transpose and Hermitian transpose are $(\cdot)^\transp$ and $(\cdot)^\htransp$. The $l_2$-norm is $\lVert\cdot\rVert^2_2$ and $\langle\cdot,\cdot\rangle$ denotes the inner product between any two vectors. The $M\times{N}$ matrix of zeros is $\mathbf{0}_{M\times N}$, whereas the identity matrix of size $N$ is $\eye{N}$. The standard basis vector $\mathbf{e}_n\in\mathbb{C}^{N\times{1}}$ is the $n$-th column of $\eye{N}$. Circularly symmetric complex-Gaussian distribution is $\mathcal{CN}(\cdot,\cdot)$. The expected value is $\mathbb{E}[\cdot]$. The product of matrices $\prod_{n=1}^{N}\mathbf{X}_n$ is set to the identity matrix if the index is out of the bound.

\section{System Model}\label{sec:systemmodel}
%============================
The system model concentrates on the uplink (UL) phase of an M-MIMO system wherein a BS equipped with $M$ antennas simultaneously serves $K$ single-antenna UEs, respectively indexed by the sets $\mathcal{M}=\{1,2,\dots,M\}$ and $\mathcal{K}=\{1,2,\dots,K\}$. The BS received signal $\mathbf{y}\in\mathbb{C}^{M\times{1}}$ is
\begin{equation}
    \mathbf{y}=\mathbf{H}\mathbf{x}+\mathbf{n}=\sum_{m=1}^{M}(\mathbf{h}^{\transp}_m\mathbf{x})\mathbf{e}_m + \mathbf{n},
    \label{eq:uplink-received-signal}
\end{equation}
where $\mathbf{H}\in\mathbb{C}^{M\times{K}}=[\mathbf{h}^{\transp}_1,\mathbf{h}^{\transp}_2,\dots, \mathbf{h}^{\transp}_M]^{\transp}$ is the channel matrix, $\mathbf{x}\in\mathbb{C}^{K\times{1}}\sim\mathcal{CN}(\mathbf{0}_{K\times{1}},p\eye{K})$ is the data signal vector with UL transmit power $p$, $\mathbf{n}\in\mathbb{C}^{M\times{1}}\sim\mathcal{CN}(\mathbf{0}_{M\times{1}},\sigma^2\eye{M})$ is the receiver noise vector with noise power $\sigma^2$.

\subsection{Channel Model}
We assume the block-fading channel model in which non-line-of-sight (NLoS) channels are considered to be frequency-flat and time-invariant \cite{Bjornson2017c}. The channel vector $\mathbf{h}_m\in\mathbb{C}^{K\times{1}}\sim\mathcal{CN}(\mathbf{0}_{K\times{1}},\mathbf{D}_m)$ encompasses the wireless links between the $K$ UEs and antenna $m\in\mathcal{M}$. The channel vectors of different antennas are considered spatially uncorrelated, \textit{i.e.}, $\mathbb{E}[\mathbf{h}_i\mathbf{h}^{\htransp}_j]=\mathbf{0}_{K\times{K}}, \ \forall i\neq{j}, \ i,j\in\mathcal{M}$. The covariance matrix $\mathbf{D}_m\in\{0,1\}^{K\times{K}}$ is a diagonal, indicator matrix that models spatial non-stationarities \cite{Ali2019,Carvalho2020}. An antenna $m$ is said to be \emph{visible} by UE $k$ if $[\mathbf{D}_m]_{k,k}=1$ and is \emph{non-visible} otherwise. We assume that the diagonal of $\mathbf{D}_m$ has at least $D_0$ nonzero entries for all $m\in\mathcal{M}$. The physical meaning of this assumption is that each antenna $m$ is contributing to the communication of at least $D_0$ UEs. Furthermore, let $D_0<D\leq{K}$ be the trace of $\mathbf{D}_m$ that is equal for all antenna $m\in\mathcal{M}$. We refer to $D$ as the \emph{number of effective UEs} served by each antenna. The spatially stationary case is $\mathbf{D}_m=\eye{K}, \ \forall m\in\mathcal{M}$, which characterizes the M-MIMO regime of compact antenna arrays. The matrix $\mathbf{D}_m\neq\eye{K}$ then introduces sparsity into $\mathbf{h}_m$ in the spatially non-stationary case. We are interested to evaluate how this sparsity affects the performance of the decentralized receivers discussed here. %In Section \ref{sec:num-res}, the generation of $\mathbf{D}_m$ is discussed in detail.

%============================
\section{Background}\label{sec:background}
%============================
In this section, we first present the basic structure of the distributed baseband processing architecture with daisy-chained RPUs and the basic principles behind the design of decentralized linear receivers. Next, we review the GD receiver from \cite{Sanchez2020} presenting two different ways to obtain this receiver that improve the understanding of the method.

%%%%%%
\subsection{Daisy-Chain Baseband Processing Architecture}
%%%%%%
We adopt the fully decentralized architecture for baseband processing at the BS considered in \cite{Sanchez2020}. For simplicity, the hardware architecture comprises $M$ RPUs, each of which manages the processing tasks concerning an antenna $m\in\mathcal{M}$.\footnote{A more practical assumption would be to group the signal processing of some antennas in an RPU, since there are some common tasks on different antennas.} The $m$-th RPU is serially connected to the $(m+1)$-th RPU, which gives rise to a daisy-chain connection layout. Only RPU $1$ has a dedicated physical link to the CPU. Figure \ref{fig:daisy-chain} gives a graphical representation of the architecture. The undirected graph consists of $M$ nodes representing the $M$ RPUs and antennas. The flat-tree is rooted at node $1$. The local knowledge available at the $m$-th node includes perfect CSI and the signal received by the $m$-th antenna, which are embraced by the tuple $(\mathbf{h}_m,y_m)$ in Fig. \ref{fig:daisy-chain}. Antennas and RPUs give physical meaning to the signals and hardware components involved in the signal reception task, while graph notation allows us to describe the distributed architecture in Fig. \ref{fig:daisy-chain} mathematically. Henceforth we will use the words antennas and nodes interchangeably, as a consequence of our simplifying assumption of associating an RPU for each antenna.

\subsection{Distributed Linear Receivers}
%Because of the benefits brought by the Massive MIMO systems, linear receivers achieve near-optimum performance while requiring less computational complexity compared to non-linear approaches.
%Given a , we describe a linear receiver as:
The design of the receiver comprehends two phases of signal processing that result respectively in a soft-estimate and a hard-estimate of $\mathbf{x}$. The soft-estimate $\hat{\mathbf{x}}\in\mathbb{C}^{K\times 1}$ can be obtained as \cite{Carvalho2012,Bjornson2017c}: %through a linear combination of the baseband information extracted from $\mathbf{y}$, described as 
\begin{equation}
    \hat{\mathbf{x}}=\mathbf{V}^\htransp\mathbf{y}=\sum_{m=1}^{M}\mathbf{v}^{*}_m y_m=\sum_{m=1}^{M}\hat{\mathbf{x}}_m,
    \label{eq:linear-receiver}
\end{equation}
where $\mathbf{V}\in\mathbb{C}^{M\times{K}}=[\mathbf{v}^{\transp}_1,\mathbf{v}^{\transp}_2,\dots,\mathbf{v}^{\transp}_M]^{\transp}$ is the linear combining matrix. %, and $\mathbf{v}_m\in\mathbb{C}^{K\times{1}}$ is the $m$-th row of $\mathbf{V}$ and $\hat{\mathbf{x}}_m$ is the . 
The objective of the $m$-th RPU is to obtain a reliable local soft-estimate $\hat{\mathbf{x}}_m$ based on the observation $y_m$ and knowing the local CSI $\mathbf{h}_m$ and given the two main constraints imposed by the decentralized architecture in Fig. \ref{fig:daisy-chain}: (C.1) \textit{Scalability.} Information exchange between RPUs must scale with $K$ instead of $M$. This discourages the exchange of local CSI $\mathbf{h}_m$ between RPUs. (C.2) \textit{Autonomy.} Only the $m$-th RPU should be responsible for extracting relevant information from the local CSI $\mathbf{h}_m$. After the RPUs converge to a final soft-estimate $\hat{\mathbf{x}}$, the CPU demodulates $\hat{\mathbf{x}}$ to produce the hard-estimate. Given the above constraints, we will focus on obtaining a distributed method to obtain the final soft-estimate $\hat{\mathbf{x}}$. 

\begin{figure}[htbp!]
    \centering
    \includegraphics[width=\columnwidth]{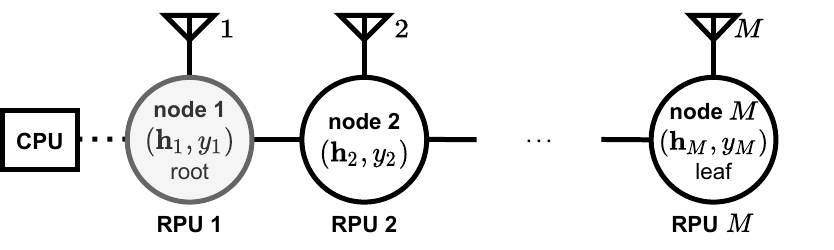}
    \caption{Decentralized baseband processing architecture at the BS following a daisy-chain arrangement. }
    \label{fig:daisy-chain}
\end{figure}

\subsection{Standard Distributed Kaczmarz (SDK) Receiver}\label{subsec:standard}
The ZF combiner can be obtained by solving the following unconstrained least-squares (LS) problem \cite{Verdu1998}:
\begin{equation}
    \argmin_{\mathbf{x}\in\mathbb{C}^{K\times{1}}}\lVert\mathbf{y}-\mathbf{H}\mathbf{x}\rVert^2_2.
    \label{eq:unconstrained-least-squares}
\end{equation}
We look at \eqref{eq:unconstrained-least-squares} as the the overdetermined system of linear equations (SLE) $\mathbf{y}=\mathbf{H}\mathbf{x}$, which is inconsistent due to noise disturbance in $\mathbf{y}$. Furthermore, $\mathbf{H}$ is full-rank in the spatially stationary case and can be rank-deficient\footnote{The sparsity in $\mathbf{H}$ can introduce the possibility that at least two channel vectors yields a linearly dependent set \cite{Ali2019}.} in the spatially non-stationary case. In our setting, due to (C.1) and (C.2), %the equations must be distributed over the nodes of the decentralized baseband processing architecture in Fig. \ref{fig:daisy-chain} and must not be communicated. That is, 
the $m$-th node only knows the $m$-th equation $y_m=\mathbf{h}^{\transp}_m\mathbf{x}$. A popular iterative algorithm that can solve $\mathbf{y}=\mathbf{H}\mathbf{x}$ is the Kaczmarz algorithm \cite{Kaczmarz1937}. Indeed, the modified Kaczmarz method introduced in \cite{Hegde2019} considers exactly the case of interest where the equations are indexed by the nodes of a tree. The method is divided into the \emph{dispersion} and \emph{pooling} stages, as discussed in the sequel, and referred to as the SDK receiver.

\vspace{2mm}
\noindent\textbf{Dispersion stage.}
The root node starts with an initial guess $\hat{\mathbf{x}}^{(t)}_0$, where $t\in\{1,2,\dots,T\}$ indexes \emph{cycles} over the tree. This initial guess is improved by iterative steps that are defined by node $m\in\mathcal{M}$ receiving from its immediate predecessor $m-1$ an input estimate $\hat{\mathbf{x}}^{(t)}_{m-1}$. Node $m$ then applies the Kaczmarz update \cite{Kaczmarz1937} to generate the following new estimate:%
\begin{IEEEeqnarray}{rCl}
    {r}^{(t)}_m&=&y_m-\mathbf{h}^{\transp}_m\hat{\mathbf{x}}^{(t)}_{m-1},\IEEEnonumber\\
    \hat{\mathbf{x}}^{(t)}_m&=&\hat{\mathbf{x}}^{(t)}_{m-1}+\lambda_m\mathbf{h}^{*}_m{r}^{(t)}_m,
    \label{eq:kaczmarz-iterative-step}
\end{IEEEeqnarray}
where $r^{(t)}_m$ is the \emph{residual} and $\lambda_m=\lambda\lVert{\mathbf{h}_m}\rVert^{-2}_2$ with $\lambda$ being the \emph{relaxation parameter}. The relaxation parameter can help in handling the inconsistency of the SLE \cite{Censor1983} and is discussed in depth in Subsection \ref{subsec:relaxation}. Fig. \ref{fig:sdk-receiver} shows a block diagram illustrating the above iterative step. The process continue recursively with node $m$ transferring $\hat{\mathbf{x}}^{(t)}_m$ to its immediate successor $m+1$. The dispersion stage finishes when the leaf node obtains its estimate $\hat{\mathbf{x}}^{(t)}_M$.

\begin{figure}[htbp!]
    \centering
    \includegraphics[width=\columnwidth]{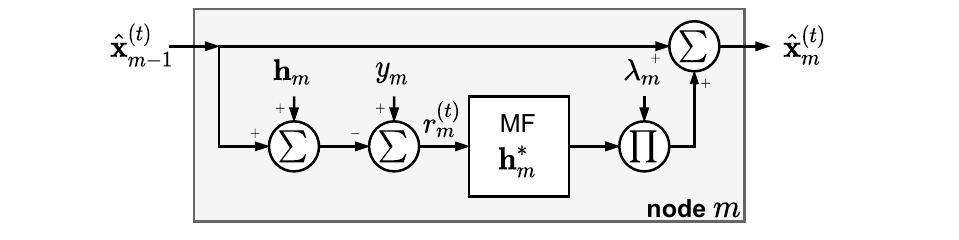}
    \caption{Kaczmarz iterative step in eq. \eqref{eq:kaczmarz-iterative-step} of the SDK receiver at node $m\in\mathcal{M}$. MF stands for matched filter.} 
    \label{fig:sdk-receiver}
\end{figure}

\vspace{2mm}
\noindent\textbf{Pooling stage.} Since we are considering a flat-tree in Fig. \ref{fig:daisy-chain}, the pooling stage simply proceeds with the leaf node backpropagating its estimate $\hat{\mathbf{x}}^{(t)}_M$ through the tree without updating and until it reaches the root node. A new cycle $t+1$ over the tree can begin by defining $\hat{\mathbf{x}}^{(t+1)}_0=\hat{\mathbf{x}}^{(t)}_M$; otherwise, $\hat{\mathbf{x}}=\hat{\mathbf{x}}^{(t)}_M$ is transferred to the CPU for demodulation. In \cite{Sanchez2020}, a dedicated physical connection between the root and leaf nodes was considered to reduce communication delay. However, we chose to consider here the most general case of backpropagation because this naturally allows the generalization of the algorithm for more intricate tree's topologies, as we will discuss in Subsection \ref{sucsec:intricate-trees}. Fig. \ref{fig:dispersion-pooling-stages} offers an overview of the dispersion and pooling stages. 
\begin{figure}[htbp!]
    \centering
    \includegraphics[width=\columnwidth]{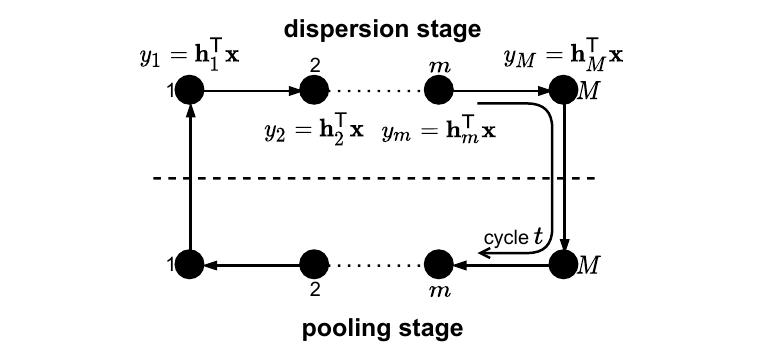}
    \vspace{-8mm}
    \caption{Illustration of the dispersion and pooling stages. The pooling stage is depicted by mirroring the nodes. The algebraic equations of $\mathbf{y}=\mathbf{H}\mathbf{x}$ are indexed by the nodes of the flat-tree in Fig. \ref{fig:daisy-chain}. A cycle $t\in\{1,2,\dots,T\}$ can be seen as the loop defined by the sequential realization of dispersion and pooling stages.}
    \label{fig:dispersion-pooling-stages}
\end{figure}

\vspace{2mm}
\noindent\textit{Remark 1 (Convergence of the SDK receiver).} %The above algorithm converges to (i) the solution and (ii) the minimum norm solution when the linear system being solved is consistent and (i) the solution is unique and (ii) there are many solutions. 
For an inconsistent SLE, the SDK receiver converges to a weighted LS solution that depends on $\lambda$ in the case of a full-rank $\mathbf{H}$ (stationary case) and to a weighted minimum norm LS solution in the case of a rank-deficient $\mathbf{H}$ (non-stationary case) \cite{Hegde2019}.

\vspace{2mm}
\noindent\textit{Remark 2 (Complexity of the SDK receiver).} Each node $m\in\mathcal{M}$ only needs to store a $K$-dimensional complex vector $\hat{\mathbf{x}}^{(t)}_m$. For the $m$-th node, the Kaczmarz iterative step in \eqref{eq:kaczmarz-iterative-step} requires $(12K+2)T$ floating-point per second (FLOPs) measured in terms of real operations, where $T$ is the total number of cycles. The overall information exchange between any two nodes scales with $4KT$ FLOPs due to the pooling stage with backpropagation.

We note that the iterative step of the SDK receiver in eq. \eqref{eq:kaczmarz-iterative-step} is essentially equivalent to the one introduced in eq. (11) of \cite{Sanchez2020}, where the authors approximated the GD method to cover constraints (C.1) and (C.2). %Since the use of the decentralized Kaczmarz algorithm from \cite{Hegde2019} does not require any approximation, we suggest using the name SDK receiver for the approach proposed in \cite{Sanchez2020} instead of GD receiver. 
The connection to the Kaczmarz algorithm is beneficial because of the convergence guarantees in \cite{Hegde2019}. Moreover, we will borrow ideas from the vast body of literature that has studied ways to improve the convergence of Kaczmarz-based approaches \cite{Hegde2019,Herman2009,Strohmer2009,Censor1983} in Section \ref{sec:contributions}.

The authors of \cite{Sanchez2020} exploited the intrinsic recursion of the Kaczmarz update step in \eqref{eq:kaczmarz-iterative-step} to calculate the combining vector $\mathbf{v}_m$. For $T=1$, the leaf estimate $\hat{\mathbf{x}}^{(1)}_M$ can be written as:
\begin{IEEEeqnarray}{rCl}
    \hat{\mathbf{x}}^{(1)}_M&=&\prod_{m=1}^{M}(\eye{K}-\lambda_m\mathbf{h}_m\mathbf{h}^{\htransp}_m)\hat{\mathbf{x}}^{(1)}_0\IEEEnonumber\\
    &+&\sum_{m=1}^{M}\prod_{i=m+1}^{M}(\eye{K}-\lambda_i\mathbf{h}_i\mathbf{h}^{\htransp}_i)\lambda_m\mathbf{h}^{*}_m y_m,
    \label{eq:xM-SDK}
\end{IEEEeqnarray}
which encounters structural similarity with eq. \eqref{eq:linear-receiver}; assuming that the initial guess is $\hat{\mathbf{x}}^{(1)}_0=\mathbf{0}_{K\times{1}}$, $\mathbf{v}_m=\prod_{i=m+1}^{M}(\eye{K}-\lambda_i\mathbf{h}_i\mathbf{h}^{\htransp}_i)\lambda_m\mathbf{h}_m$. %The expression above also uncovers how the SDK receiver is capable of performing control over IUI. Notice that $\eye{K}-\lambda_i\mathbf{h}_i\mathbf{h}^{\htransp}_i$ is the orthogonal projection into the space orthogonal to $\mathbf{h}_i$.  
%Here we will not delve into the discussion of methods that explicitly obtain $\mathbf{v}_m$. However, 
Expressions like the above will be used to draw some conclusions throughout the paper and make it explicit the straightforwardly generalization of the receivers to obtain $\mathbf{v}_m$ in a distributed manner.

\subsection{Successive Residual Cancellation (SRC) Receiver}
An alternative approach in obtaining the soft-estimate $\hat{\mathbf{x}}$ is to apply the heuristic idea of the linear SIC receiver. The principle of SIC can be seen as a two step operation \cite{Verdu1998,Carvalho2012}: (S.1) one soft-estimate is estimated at a time and (S.2) its contribution is removed from the received signal. We want to apply these steps to combine the local soft-estimates $\hat{\mathbf{x}}_m$ between the different nodes in the distributed architecture of Fig. \ref{fig:daisy-chain}. To this end, an observation is in order. In the conventional multiuser detection problem \cite{Verdu1998,Carvalho2012}, the SIC suppresses inter-user interference. In the new distributed processing problem, each node $m\in\mathcal{M}$ only knows $y_m$ and $\mathbf{h}_m$ and gets from these a local soft-estimate $\hat{\mathbf{x}}_m$. The nodes have then to agree on a final soft-estimate $\hat{\mathbf{x}}$ of $\mathbf{x}$. Therefore, the interference notion of the multiuser setting is translated into the residual difference between $\hat{\mathbf{x}}_m$'s of the nodes.

%However, we note that there is no interference between the signals of the nodes $m\in\mathcal{M}$.

With this observation, we apply the SIC steps to successively eliminate the residuals, giving rise to the following \emph{successive residual cancellation} (SRC) receiver:\footnote{The SRC principle goes back to the idea of solving the SLE by operating on one equation at a time \cite{Kaczmarz1937}.} 
\begin{IEEEeqnarray}{rCl}
    \hat{\mathbf{x}}_m =\mathbf{w}^{*}_m y_m + \sum_{i=1}^{m-1}\hat{\mathbf{x}}_i&,& \qquad  \text{from (S.1)},\IEEEnonumber\\
    \mathbf{y}^{(m)}=\mathbf{y}^{(m-1)}-(\mathbf{w}^{\transp}_m\hat{\mathbf{x}}_{m-1})\mathbf{e}_m, \mathbf{y}^{(1)}=\mathbf{y}&,&\qquad \text{from (S.2)}, \quad  \label{eq:sic-receiver}
\end{IEEEeqnarray}
where $\mathbf{w}_m\in\mathbb{C}^{K\times 1}$ stands for a linear combiner and $\mathbf{y}^{(m)}$ denotes the residual, that is, the received signal $\mathbf{y}$ less the estimates $\hat{\mathbf{x}}_i$ from the previous $m-1$ iterates. From a straightforward matrix-algebra analysis over the recursion of $\mathbf{y}^{(m)}$, we have the following soft-decision estimate after $M$ SRC iterations:
\begin{equation}
    \hat{\mathbf{x}}_M=\sum_{m=1}^{M}\prod_{i=m+1}^{M}(\eye{K}-\mathbf{w}_i\mathbf{w}^{\htransp}_i)\mathbf{w}^{*}_m y_m,
    \label{eq:xM-SIC-antenna}
\end{equation}
which is mathematically equivalent to the expression obtained for the SDK receiver in \eqref{eq:xM-SDK} when the normalized matched-filter (MF) is chosen $\mathbf{w}_m=\lVert{\mathbf{h}_m}\rVert^{-2}_2\mathbf{h}_m$, zero vector is the initial guess, and $\lambda=1$. %This is an interesting result that shows that the structure of the SDK receiver in \eqref{eq:kaczmarz-iterative-step} can be justified by the heuristic principle behind the well-known SIC scheme \cite{Verdu1998}. 
In fact, the SDK receiver can be seen as a generalization of the SRC due to the relaxation parameter $\lambda$. In \cite{Bentrcia2019}, a similar connection was found between the Kaczmarz algorithm and the SIC receiver in the usual multiuser setting.%, but the authors do not consider the setting of decentralized receivers.

%We can benefit from the connection between SDK and SIC receivers in different ways. For example, it is well known that the order in which the sequence of soft-estimates are generated affects the performance of the SIC receiver \cite{Verdu1998}. Making a parallel, we can expect something similar on the SDK receiver. Indeed, there is a whole branch of studies on how the order of the equations affects the Kaczmarz algorithm \cite{Strohmer2009}. Further discussion is provided in Subsection \ref{sec:contributions}.

%To make explicit the difference between the SRC and SIC receivers, we now provide the expression equivalent to \eqref{eq:xM-SIC-antenna} for the SIC scheme that eliminates IUI. After $k\in\mathcal{K}$ SIC iterations, we get \cite{Rasmussen2000,Carvalho2012}
% \begin{equation}
% \hat{{x}}_k=\prod_{j=1}^{k-1}(\eye{M}-\bar{\mathbf{w}}_j\bar{\mathbf{w}}^{\htransp}_j)\bar{\mathbf{w}}^{\htransp}_j\mathbf{y},
% \end{equation}
% where $\bar{\mathbf{w}}_j$ is the $j$-th column of the linear combining matrix $\mathbf{W}$ and $\hat{x}_k$ is the $k$-th entry of $\hat{\mathbf{x}}$.

%In this case, it is also known that the linear SIC receiver asymptotically converges to the ZF \cite{Elders1997,Rasmussen2000}. Therefore, we conclude that the functionality of the SDK receiver can be explained by the classic SIC mechanism, which describes an intuitively reasonable way of producing $\hat{\mathbf{x}}_m$ from $\hat{\mathbf{x}}_{m-1}$. 

%============================
\section{Accelerating Initial Convergence}\label{sec:contributions}
%============================
In this section, motivated by the connections made with the Kaczmarz algorithm and the SRC principle, we discuss different ways to improve the initial convergence of the SDK receiver. The interest in improving its initial behavior lies in the fact that we naturally want the algorithm to end as soon as possible, thus reducing the complexity and latency to obtain a final soft-estimate $\hat{\mathbf{x}}$.

%The interpretation of the linear SIC receiver as related to the Kaczmarz method is also desirable, since we can use tools from this iterative method to speed up the convergence of the sequence of soft-decision estimates $\hat{\mathbf{x}}_m$, such as the choice of the relaxation parameter $\lambda$ (see Subsection \ref{subsec:relaxation}). In Section \ref{sec:contributions}, we introduce different ways to improve convergence based on previous works studying the Kaczmarz algorithm.

\subsection{Bayesian Distributed Kaczmarz (BDK) Receiver}\label{subsec:bayesian}
One way to handle the inconsistency of $\mathbf{y}=\mathbf{H}\mathbf{x}$ is to adopt the Bayesian perspective for finding ${\mathbf{x}}$. Given a realization of the channel matrix $\mathbf{H}$, the idea is to incorporate prior\footnote{In the event that other practical information is known about the $\mathbf{x}$ and $\mathbf{n}$ vectors, processes known as tricks can be inserted to further improve convergence performance. For example, in the case that $\mathbf{x}$ is drawn from a 16-QAM digital modulation scheme. See \cite{Herman2009} for an overview of these tricks.} knowledge of the data signal $\mathbf{x}$ and receiver noise $\mathbf{n}$ from \eqref{eq:uplink-received-signal} to the estimation problem. We make the following assumptions: (A.1) either $\mathbf{x}$ and $\mathbf{n}$ follows complex circularly-symmetric central Gaussian distributions, which is justified by a worst-case scenario from the point of view of communication \cite{Bjornson2017c}; (A.2) the entries of either $\mathbf{x}$ and $\mathbf{n}$ are uncorrelated and have the same variance (UL transmit power ${p}$ and noise power $\sigma^2$, respectively), which is justified respectively by: each UE generates its data independently; each antenna sensor experiences noise independently. This gives rise to the following $l_2$-constrained LS problem that maximizes the \textit{a posteriori} probability \cite{Herman2009}:
\begin{equation}
    \argmin_{\mathbf{x}\in\mathbb{C}^{K\times{1}}}\lVert\mathbf{y}-\mathbf{H}\mathbf{x}\rVert^2_2+\xi\lVert{\mathbf{x}}\rVert^2_2,
    \label{eq:constrained-least-squares}
\end{equation}
where $\xi=\sigma^2/p$ is the inverse of the pre-processing signal-to-noise (SNR) ratio. Some readers familiar with the regularization method may recognize that the problem above can also be interpreted from the point of view of the Tikhonov regularization \cite{Shai2014} and its solution is related to the regularized zero-forcing (RZF) combiner \cite{Bjornson2017c,Verdu1998}. What we essentially have now from \eqref{eq:constrained-least-squares} is the following underdetermined SLE $\mathbf{y}=\mathbf{H}\mathbf{x}+\sqrt{\xi}\mathbf{u}=\mathbf{B}\mathbf{z}$, where $\mathbf{B}\in\mathbb{C}^{M\times{(K+M)}}=[\mathbf{H},\sqrt{\xi}\eye{M}]$ and $\mathbf{z}\in\mathbb{C}^{(K+M)\times{1}}=[{\mathbf{x}\in\mathbb{C}^{K\times{1}};\mathbf{u}}\in\mathbb{C}^{M\times1}]$ \cite{Herman2009}. Different from the previous SLE, $\mathbf{y}=\mathbf{B}\mathbf{z}$ is consistent due the augmented range brought by $\mathbf{u}$. Moreover, $\mathbf{B}$ is full-rank in the spatially stationary case and can be rank-deficient in the spatially non-stationary case. Below, we describe the distributed Kaczmarz algorithm \cite{Hegde2019} applied to solve $\mathbf{y}=\mathbf{B}\mathbf{z}$ and omit repetitive definitions. We refer to the procedure obtained as the \emph{Bayesian distributed Kaczmarz} (BDK) receiver.

\vspace{2mm}
\noindent\textbf{Dispersion stage.} Initial guesses $\hat{\mathbf{x}}^{(t)}_{0}$ and $\hat{\mathbf{u}}^{(t)}_{0}$ are available at the root node. The Kaczmarz iterative step at node $m$ is now:
\begin{IEEEeqnarray}{rCl}
{r}^{(t)}_m&=&y_m-\mathbf{h}^{\transp}_m\hat{\mathbf{x}}^{(t)}_{m-1}-\sqrt{\xi}\mathbf{e}^{\transp}_m\hat{\mathbf{u}}^{(t)}_{m-1}\IEEEnonumber,\\
\hat{\mathbf{x}}^{(t)}_{m}&=&\hat{\mathbf{x}}^{(t)}_{m-1}+\lambda^{\star}_m\mathbf{h}^{*}_m{r}^{(t)}_m,\IEEEnonumber\\
\hat{\mathbf{u}}^{(t)}_{m}&=&\hat{\mathbf{u}}^{(t)}_{m-1}+\lambda^{\star}_m\sqrt{\xi}\mathbf{e}^{*}_m{r}^{(t)}_m,
\label{eq:bayesian-kaczmarz-iterative-step}
\end{IEEEeqnarray}
where $\lambda^{\star}_m=\lambda^{\star}(\lVert\mathbf{h}_m\rVert^2_2+\xi)^{-1}$ is the relaxation parameter regarding node $m\in\mathcal{M}$. Note that $\lambda^{\star}$ of the BDK receiver is different from $\lambda$ of the SDK. Since the problem being solved is now consistent, we will set $\lambda^{\star}=1$.\footnote{Future work can discuss a more suitable choice.} %also differ from $\lambda_m$, as we will better evidenced. 
Fig. \ref{fig:bdk-receiver} illustrates the new Kaczmarz iterative step. The dispersion stage ends when the leaf node gets $\hat{\mathbf{x}}^{(t)}_{M}$ and $\hat{\mathbf{u}}^{(t)}_{M}$.

\begin{figure}[htbp!]
    \centering
    \includegraphics[width=\columnwidth]{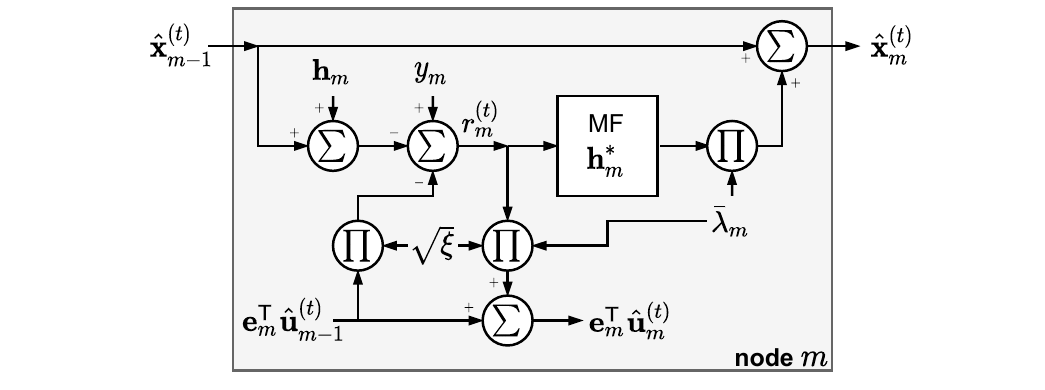}
    \caption{Kaczmarz iterative step in eq. \eqref{eq:bayesian-kaczmarz-iterative-step} of the BDK receiver at node $m\in\mathcal{M}$.}
    \label{fig:bdk-receiver}
\end{figure}

\vspace{2mm}
\noindent\textbf{Pooling stage.} At first sight, the BDK receiver with iterative step defined by \eqref{eq:bayesian-kaczmarz-iterative-step} seems to make little sense. The algorithm description suggests that $\hat{\mathbf{u}}^{(t)}_{m}$ should be exchanged between nodes, whose vector is in $\mathbb{C}^{M\times{1}}$; therefore, violating (C.1). However, a careful look at the iterative step shows that the $m$-th component of $\hat{\mathbf{u}}^{(t)}_{m}$ is only altered by the $m$-th node because of $\mathbf{e}_m$. As a consequence, $\hat{\mathbf{u}}^{(t)}_{M}$ does not need to be backpropagated to the root node. %, since the $m$-th node only needs to store the $m$-th entry of $\hat{\mathbf{u}}^{(t)}_{M}$. 
To gain further intuition about this, note that $\sqrt{\xi}\mathbf{e}^{\transp}_{m}\hat{{\mathbf{u}}}_{m-1}$ is trivially an estimate of $n_m$, which is the receiver noise relative to the $m$-th antenna in \eqref{eq:uplink-received-signal}. Hence, here the pooling stage progresses akin to that of the SDK algorithm, as shown is Fig. \ref{fig:dispersion-pooling-stages}.

\vspace{2mm}
\noindent\textit{Remark 3 (Convergence of the BDK receiver).} For a consistent SLE, the BDK receiver converges to the minimum norm LS solution in either stationary and non-stationary cases \cite{Hegde2019}.

\vspace{2mm}
\noindent\textit{Remark 4 (Complexity of the BDK receiver).} Each node $m\in\mathcal{M}$ requires fixed storage for a $K$-dimensional complex vector $\hat{\mathbf{x}}^{(t)}_{m}$ and a complex scalar representing the $m$-th entry of $\hat{\mathbf{u}}^{(t)}_{m}$. For the $m$-th node, the Kaczmarz iterative step in \eqref{eq:bayesian-kaczmarz-iterative-step} requires $(12K+6)T$ FLOPs. The overall information exchange between any two nodes scales with $4KT$ FLOPs.

From Remarks 2 and 4, it is easy to see that the computational requirements of the SDK and BDK receivers are essentially the same thanks to the benign property involving the distributed knowledge of $\hat{\mathbf{u}}^{(t)}_{M}$ in the nodes. The remarks also comprise an upper bound of complexity, since they consider $\mathbf{h}_m$ dense. Further computational gains are obtained when considering $\mathbf{h}_m$ sparse \cite{Herman2009}. %that arises in the non-stationary case. In virtue of the low memory requirement of the nodes, the receivers are said to be \emph{storage efficient} \cite{Herman2009}. 

\subsubsection{Understanding BDK receiver} There are two key differences between the SDK and BDK receivers. The first is the incorporation of prior knowledge $\xi$ in $\lambda^{\star}_m=\lambda^{\star}(\lVert\mathbf{h}_m\rVert^2_2+\xi)^{-1}, \ \forall m \in\mathcal{M}$, which helps to combat the noise propagation through the residuals. The second is a change on the way that the estimates are combined through the nodes due to the addition of the noise estimation. For intuition about this, we use part of the recursion of the Kaczmarz iterative step in \eqref{eq:bayesian-kaczmarz-iterative-step} to generate the following expressions for $\hat{\mathbf{x}}^{(1)}_{M}$ and $ \hat{\mathbf{u}}^{(1)}_{M}$:
\begin{IEEEeqnarray}{rCl}
    \hat{\mathbf{x}}^{(1)}_M&=&\prod_{m=1}^{M}(\eye{K}-\lambda^{\star}_m\mathbf{h}_m\mathbf{h}^{\htransp}_m)\hat{\mathbf{x}}^{(1)}_0\IEEEnonumber\\
    &+&\sum_{m=1}^{M}\prod_{i=m+1}^{M}(\eye{K}-\lambda^{\star}_i\mathbf{h}_i\mathbf{h}^{\htransp}_i)\lambda^{\star}_m\mathbf{h}^{*}_m( y_m-\sqrt{\xi}\mathbf{e}^{\transp}_m\hat{\mathbf{u}}^{(t)}_{m-1}),\IEEEnonumber\\%\label{eq:xM-BDK}\\
    \hat{\mathbf{u}}^{(1)}_M&=&\prod_{m=1}^{M}(\eye{M}-\lambda^{\star}_m\xi\mathbf{e}_m\mathbf{e}^{\htransp}_m)\hat{\mathbf{u}}^{(1)}_0\IEEEnonumber\\
    &+&\sum_{m=1}^{M}\prod_{i=m+1}^{M}(\eye{M}-\lambda^{\star}_i\xi\mathbf{e}_i\mathbf{e}^{\htransp}_i)\lambda^{\star}_m\sqrt{\xi}\mathbf{e}^{*}_m( y_m-\mathbf{h}^{\transp}_m\hat{\mathbf{x}}^{(t)}_{m-1}).\IEEEnonumber%\label{eq:uM-BDK}
\end{IEEEeqnarray}
Comparing these with eq. \eqref{eq:xM-SDK}, it is possible to note that the estimates now rely on "effective" received signals. For example, to estimate a new $\hat{\mathbf{x}}^{(t)}_m$, the effective signals of the form $y_i-\sqrt{\xi}\mathbf{e}^{\transp}_i\hat{\mathbf{u}}^{(t)}_{i-1}$ for $i\in\mathcal{M}, i\leq m$ will be considered, which eliminates the current noise estimate $\sqrt{\xi}\mathbf{e}^{\transp}_i\hat{\mathbf{u}}^{(t)}_{i-1}$ from the true received received signal $y_i$.

\subsection{Choosing the Relaxation Parameter for the SDK Receiver}\label{subsec:relaxation}
The relaxation parameter can significantly improve the performance of the Kaczmarz algorithm in practice. However, it is very difficult to find a suitable relaxation parameter, since its optimization is non-trivial \cite{Herman2009}. 
%In the area of image reconstruction , for example, the relaxation parameter is often selected experimentally. Fortunately, in our context, the communication model allows us to determine more efficient ways to choose the relaxation parameter. 
The authors in eq. (22) of \cite{Sanchez2020} proposed the choice $\lambda^{\dagger}=\frac{1}{2}\frac{K}{M}\log(4\cdot M \cdot \mathrm{SNR})$ for the SDK receiver. Despite the good results, this value comes from the approximation of a closed-form expression of the signal-to-interference-plus-noise ratio (SINR) for $\hat{\mathbf{x}}^{(1)}_M$ expressed in eq. \eqref{eq:xM-SDK}. Therefore, this approach is very inflexible in the sense that we need to approximate the SINR for different cycles $t$ and new approximations are necessary for different channel models. In this part, we propose a more practical and effective way to define the relaxation parameter resulting in better, although sub-optimal yet, performance than the approach proposed in \cite{Sanchez2020}. Our method is a heuristic based on the combination of the following three observations:\footnote{There are some approaches that choose the relaxation parameters based on the normalization of the eigenvalues of $\mathbf{H}^{\htransp}\mathbf{H}$ and improvement of its condition number \cite{Hanke1990}. However, due to the decentralized constraints (C.1) and (C.2), we do not have access to $\mathbf{H}^{\htransp}\mathbf{H}$ let alone its inverse.} 

\noindent (O.1) \textit{Connection to GD.} In \cite{Shai2014}, a simple and reasonable mathematical framework to select the step-size of the GD method was given. The step-size of the GD mathematically resembles $\lambda$ of the Kaczmarz iterative step in \eqref{eq:kaczmarz-iterative-step}. 

\noindent (O.2) \textit{Noise Aggregation.} In \cite{Elfving2014}, it was shown that the propagation of the noise error embedded in the residual is proportional to $\sqrt{m}$, where $m\in\mathcal{M}$ represents the current Kaczmarz iterative step applied by node $m$. This intuitively makes sense in the \emph{case of dense} $\mathbf{h}_m$, since $\hat{\mathbf{x}}^{(t)}_{m-1}$ suffers from the aggregation of $m-1$ different and independent noise sources of the type $n_{i}\sim\mathcal{CN}(0,\sigma^2), \ i<m, \ \forall i\in\mathcal{M}$. This suggest the definition of a $\lambda$ that is a function of $m$ and decreases according to $\sqrt{m}$. Since there is less information in future iterations, we can expand this idea to also provide a $\lambda$ that decreases with the number of cycles in the form $\sqrt{t}$.

\noindent (O.3) \textit{Underrelaxation.} It has been found in \cite{Herman2009,Censor1983} that the use of $0<\lambda<1$ improves the performance of the Kaczmarz-based methods when solving inconsistent SLEs. 

Our proposal is to chose $\lambda$ based on the mathematical framework provided by (O.1) and incorporate (O.2) and (O.3) to the solution. First, we obtain the following result.

\begin{theorem}\label{theorem:01}
Let $\mathbf{x}$ be the solution of \eqref{eq:unconstrained-least-squares} and $\boldsymbol{\varrho}^{(t)}_m=\mathbf{h}^{*}_m{r}^{(t)}_m$. The Kaczmarz iterative step in \eqref{eq:kaczmarz-iterative-step} can then be rewritten as
\begin{equation}
    \hat{\mathbf{x}}^{(t)}_m=\hat{\mathbf{x}}^{(t)}_{m-1} + \lambda_m \boldsymbol{\varrho}^{(t)}_m.
    \label{eq:theorem:sse:01}
\end{equation}
If $t=1$ and $\hat{\mathbf{x}}^{(1)}_{0}=\mathbf{0}_{K\times{1}}$, the expected sum of the squared errors of the form $\lVert\hat{\mathbf{x}}^{(1)}_{m}-\mathbf{x}\rVert^{2}_{2}$ conditioned on the set $\{\mathbf{h}_m\}, \ \forall m \in\mathcal{M}$ that represents perfect CSI knowledge can be bounded by the following quantity:
\begin{IEEEeqnarray}{rCl}
    &&\mathbb{E}\left[\sum_{m=1}^{M}\mathcal{R}(\langle\hat{\mathbf{x}}^{(1)}_{m-1}-\mathbf{x}, \boldsymbol{\varrho}^{(1)}_m\rangle)|\{\mathbf{h}_m\}\right]\geq-\dfrac{1}{2\lambda_1}\mathbb{E}[\ltwonorm{\mathbf{x}}|\{\mathbf{h}_m\}]\IEEEnonumber\\
    &&-\sum_{m=1}^{M}\dfrac{\lambda_m}{2}\mathbb{E}[\ltwonorm{\boldsymbol{\varrho}^{(1)}_m}|\{\mathbf{h}_m\}],
    \label{eq:theorem:sse:02}
\end{IEEEeqnarray}
where expectation is taken with respect to $\mathbf{x}$ and $\mathbf{n}$. In particular, if we specify that $\mathbb{E}[\ltwonorm{\mathbf{x}}|\{\mathbf{h}_m\}]\geq B^2$, $\mathbb{E}[\ltwonorm{\boldsymbol{\varrho}^{(1)}_m}|\{\mathbf{h}_m\}]\geq\rho^2$, for every $B,\rho>0$ for all $m\in\mathcal{M}$, and if we define $\lambda_m=\sqrt{\frac{B^2}{\rho^2 m}}\lVert\mathbf{h}_m\rVert^{-2}_{2}$, we have that
\begin{IEEEeqnarray}{rCl}
    \mathbb{E}\left[\sum_{m=1}^{M}\mathcal{R}(\langle\hat{\mathbf{x}}^{(1)}_{m-1}-\mathbf{x}, \boldsymbol{\varrho}^{(1)}_m\rangle)|\{\mathbf{h}_m\}\right]&\geq&-\dfrac{\ltwonorm{\mathbf{h}_1}B\rho}{2}\IEEEnonumber\\
    &-&\dfrac{B\rho}{2}\sum_{m=1}^{M}\dfrac{\lVert{\mathbf{h}_m}\rVert^{-2}_2}{\sqrt{m}}.
    \label{eq:theorem:sse:03}
\end{IEEEeqnarray}
Given the distributions of $\mathbf{x}$ and $\mathbf{n}$ in \eqref{eq:uplink-received-signal}, loose choices are $B^2=Kp$ and $\rho^2=\sigma^2$ for the SDK receiver and dense $\mathbf{h}_m, \ \forall m\in\mathcal{M}$.
\end{theorem}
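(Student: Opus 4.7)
My plan is to follow the classical online-gradient-descent template of \cite{Shai2014}, which is precisely the framework that observation (O.1) identifies. The identity \eqref{eq:theorem:sse:01} is immediate: substituting $\boldsymbol{\varrho}^{(t)}_m = \mathbf{h}^{*}_m r^{(t)}_m$ into \eqref{eq:kaczmarz-iterative-step} and recalling that $\lambda_m = \lambda\lVert\mathbf{h}_m\rVert^{-2}_{2}$ already absorbs the row-norm normalisation yields the stated additive form. Reading the update this way puts it in exact analogy with a (stochastic) gradient-descent step where $-\boldsymbol{\varrho}^{(1)}_m$ plays the role of the conjugate gradient, so the standard progress lemma becomes available.

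For \eqref{eq:theorem:sse:02}, I would apply the law-of-cosines identity in $\mathbb{C}^{K}$ to $\hat{\mathbf{x}}^{(1)}_m - \mathbf{x} = (\hat{\mathbf{x}}^{(1)}_{m-1} - \mathbf{x}) + \lambda_m \boldsymbol{\varrho}^{(1)}_m$, obtaining
\[
\ltwonorm{\hat{\mathbf{x}}^{(1)}_m - \mathbf{x}} - \ltwonorm{\hat{\mathbf{x}}^{(1)}_{m-1} - \mathbf{x}} = 2\lambda_m\,\mathcal{R}\bigl(\langle \hat{\mathbf{x}}^{(1)}_{m-1} - \mathbf{x},\,\boldsymbol{\varrho}^{(1)}_m\rangle\bigr) + \lambda_m^{2}\,\ltwonorm{\boldsymbol{\varrho}^{(1)}_m}.
\]
Isolating the real-part inner product, summing over $m = 1,\dots,M$, and using $\hat{\mathbf{x}}^{(1)}_0 = \mathbf{0}_{K\times 1}$ telescopes the norm differences into $\tfrac{1}{2\lambda_1}\bigl(\ltwonorm{\hat{\mathbf{x}}^{(1)}_M-\mathbf{x}} - \ltwonorm{\mathbf{x}}\bigr)$; dropping the non-negative $\ltwonorm{\hat{\mathbf{x}}^{(1)}_M-\mathbf{x}}$ to get a lower bound, then taking the conditional expectation over $(\mathbf{x},\mathbf{n})$ given $\{\mathbf{h}_m\}$, delivers the inequality. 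The step I expect to be the main obstacle is this telescoping when the $\lambda_m$ genuinely vary with $m$: a clean collapse to the single factor $1/(2\lambda_1)$ is immediate when the $\lambda_m$ are treated as constant, whereas the general case leaves an Abel-summation residual $\sum_{m=1}^{M-1}\ltwonorm{\hat{\mathbf{x}}^{(1)}_m-\mathbf{x}}\,\bigl(1/(2\lambda_m) - 1/(2\lambda_{m+1})\bigr)$ that must be absorbed via the standard monotone step-size trick à la Theorem 14.11 of \cite{Shai2014}.

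The remaining pieces are computational. For \eqref{eq:theorem:sse:03}, substituting $\lambda_m = \sqrt{B^{2}/(\rho^{2}m)}\,\lVert\mathbf{h}_m\rVert^{-2}_{2}$ into \eqref{eq:theorem:sse:02} gives $1/(2\lambda_1) = \rho\lVert\mathbf{h}_1\rVert^{2}_{2}/(2B)$ and $\lambda_m/2 = B\,\lVert\mathbf{h}_m\rVert^{-2}_{2}/(2\rho\sqrt{m})$, and treating $B^{2}$ and $\rho^{2}$ as upper bounds on $\mathbb{E}[\ltwonorm{\mathbf{x}}\mid\{\mathbf{h}_m\}]$ and $\mathbb{E}[\ltwonorm{\boldsymbol{\varrho}^{(1)}_m}\mid\{\mathbf{h}_m\}]$ (in the direction demanded by the minus signs in the bound) collects a common factor of $B\rho/2$. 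Finally, the loose identifications $B^{2} = Kp$ and $\rho^{2} = \sigma^{2}$ follow by distributional inspection: $\mathbf{x}\sim\mathcal{CN}(\mathbf{0}_{K\times 1},p\eye{K})$ gives $\mathbb{E}[\ltwonorm{\mathbf{x}}] = Kp$ exactly, while for dense $\mathbf{h}_m$ the first-sweep residual $r^{(1)}_m$ is dominated by the additive noise $n_m\sim\mathcal{CN}(0,\sigma^{2})$ once the row-norm normalisation encoded in $\lambda_m$ has been extracted, so $\sigma^{2}$ is the natural order-of-magnitude surrogate for the squared gradient norm.
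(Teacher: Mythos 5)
Your proposal follows essentially the same route as the paper's appendix: the law-of-cosines expansion of $\ltwonorm{\hat{\mathbf{x}}^{(1)}_m-\mathbf{x}}$, isolation of the real inner product, summation over $m$, dropping of the nonnegative terminal term, and conditional expectation over $(\mathbf{x},\mathbf{n})$ given $\{\mathbf{h}_m\}$. Two of your side remarks deserve emphasis because they flag points where the paper's own proof is looser than yours. First, the telescoping: step (a) of the appendix treats $\sum_{m=1}^{M}\frac{1}{2\lambda_m}\bigl(\ltwonorm{\hat{\mathbf{x}}^{(1)}_m-\mathbf{x}}-\ltwonorm{\hat{\mathbf{x}}^{(1)}_{m-1}-\mathbf{x}}\bigr)$ as if it collapsed to the two boundary terms, which is exact only for constant $\lambda_m$; with the $m$-dependent $\lambda_m$ actually used in \eqref{eq:theorem:sse:03}, the Abel residual $\sum_{m=1}^{M-1}\ltwonorm{\hat{\mathbf{x}}^{(1)}_m-\mathbf{x}}\bigl(\frac{1}{2\lambda_m}-\frac{1}{2\lambda_{m+1}}\bigr)$ you identify is genuinely present, is not sign-definite (and is nonpositive whenever the effective step size decreases), so it cannot simply be discarded in a lower bound and must be absorbed by the bounded-iterate argument you cite. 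Second, the direction of the auxiliary bounds: as you observe, the minus signs in \eqref{eq:theorem:sse:02} require $\mathbb{E}[\ltwonorm{\mathbf{x}}|\{\mathbf{h}_m\}]\leq B^2$ and $\mathbb{E}[\ltwonorm{\boldsymbol{\varrho}^{(1)}_m}|\{\mathbf{h}_m\}]\leq\rho^2$, i.e., upper bounds, whereas the theorem states ``$\geq$''; for $B^2=Kp$ this is harmless since the expectation equals $Kp$ exactly, but the statement as written is inconsistent with the derivation, and your reading is the one that makes \eqref{eq:theorem:sse:03} follow.
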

\begin{proof}
    The proof is given in Appendix A.
\end{proof}

Theorem $\ref{theorem:01}$ relates the choice of the relaxation parameter to a quantity of interest that measures how the expectation of the sum of the squared errors of the type $\lVert\hat{\mathbf{x}}^{(1)}_{m}-\mathbf{x}\rVert^{2}_{2}$ progresses through the nodes. In \eqref{eq:theorem:sse:03}, we already have considered part of (O.2) by incorporating the factor $\sqrt{m}$ into the parameter choice. Plugging (O.2), (O.3), and the result of the application of (O.1) in Theorem \ref{theorem:01} in \eqref{eq:theorem:sse:03}, we can heuristically set
\begin{equation}
    \lambda_m=\underbrace{\min\left[\sqrt{\frac{K\cdot\mathrm{SNR}}{t\cdot m}},\, 1\right]}_{=\lambda_{m,t}}\lVert\mathbf{h}_m\rVert^{-2}_{2},
    \label{eq:lambda:sdk:proposed}
\end{equation}
where $\mathrm{SNR}=p/\sigma^2$. %and $C$ is equal to $K$ for the spatially stationary case of dense $\mathbf{h}_m$ and to $D$ for the spatially non-stationary case of sparse $\mathbf{h}_m$.
The proposed relaxation parameter {${\lambda}_{m,t}$} is a function of, according to (O.2), $m\in\mathcal{M}$ and $t\in\{1,2,\dots,T\}$. The minimum function ensures that underrelaxation is applied in each node $m$ due to (O.3). 

% For the  our approach can be easily adapted in the following way:
% \begin{equation}
%     \lambda^{\text{sparse}}_m=\min\left(\sqrt{\frac{D\cdot\mathrm{SNR}}{t\cdot m}}, 1\right)\lVert\mathbf{h}_m\rVert^{-2}_{2},
%     \label{eq:lambda:sdk:sparse}
% \end{equation}
% where recall that $D$ is the effective number of UEs served by each node. The idea is that the $m$-th node only contributes to the solution of $D$ UEs even that all the UEs $K$ are being served by the BS.

%\noindent\textbf{Choosing $\lambda^{\star}$ for BDK receiver.} \vc{Working on this. Still thinking about the viability.}

\subsection{Ordering and Randomization}\label{subsec:ord-rand}
In the context of SIC receivers, it is well known that the order the UEs are detected can affect the overall reliability of the successive detector due to error propagation \cite{Verdu1998,Carvalho2012}. A good strategy is then to detect the UEs in the order of decreasing post-processing SNRs. In the decentralized setting, this observation also applies for the SRC-based receivers: the order in which local estimates $\hat{\mathbf{x}}_m$ are obtained at each node $m\in\mathcal{M}$ can increase the risk of error propagation. In the context of the Kaczmarz algorithm, this problem is also known for the fact that an efficient order of the equations of the SLE being solved can significantly improve the initial behavior of the algorithm in practice \cite{Herman2009}. Since it is often difficult to obtain the optimum order, a suboptimal approach consists in randomizing the choice of the equations based on their energies \cite{Strohmer2009}. In our setting, however, the nodes are physically connected sequentially, which prevents us from ordering how the local estimates are obtained without adding a moderate level of coordination by the CPU, while aggregating an overhead to the process. Motivated by \cite{Strohmer2009}, we partially randomized the order by selecting the root node of the flat-tree in Fig. \ref{fig:daisy-chain} based on probabilities proportional to $\ltwonorm{\mathbf{h}_m}$. Although small improvements could be seen in high SNR, the results obtained were considered not very promising. Further study is needed to comprehend how the wired restriction affects the error propagation in SRC-based receivers. In the XL-MIMO case, this discussion is even more important, since the received energy is unequal across the nodes. 

%------------------------
\subsection{Tree-Based Baseband Processing Architectures}\label{sucsec:intricate-trees}
%------------------------
%The definition of the dispersion and pooling stages of the distributed Kaczmarz algorithm in \cite{Hegde2019} naturally allows us to generalize the operation of the SDK receiver to more intricate fully decentralized tree-based baseband processing architectures than the simple flat-tree described in Fig. \ref{fig:daisy-chain}. 
To illustrate the generalization of the SDK receiver to tree-based processing architectures, we rely on the concept of sub-arrays introduced in \cite{Carvalho2020}. A sub-array embraces part of the $M$ antenna elements that forms the whole antenna array deployed at the BS in which the channel can be considered spatially stationary. Fig. \ref{fig:coarse-tree} illustrates a decentralized processing architecture with a sub-array $s_i$ being represented by a \emph{node common bus}, where $i\in\{1,2,\dots,S\}$ and $S$ is the total number of subarrays. The sub-arrays are connected to the \emph{sub-array common bus}, which can be seen as the root $r$ of a tree. The leaf nodes $l_m, \ m\in\mathcal{M}$ represents the $M$ RPUs and have as information one equation $y_m=\mathbf{h}_m^{\transp}\mathbf{x}$ of the SLE $\mathbf{y}=\mathbf{H}\mathbf{x}$. There are now weights of the form $g(n,v)=g(v,n)$, with $g(n,v)>0$, associated to the links between $n$ and $v$, which denote different elements of the architecture.

%with sub-trees representing the concept of sub-arrays that are represented as the , The tree is rooted at node $r$ given by. Each leaf node $m\in\mathcal{M}$. The root node $r$ and sub-array nodes $s_i$ do not have information about the SLE and only works as relays of information. 

Two changes are needed to define the SDK algorithm described previously for the structure in Fig. \ref{fig:coarse-tree}. First, in the dispersion stage, the initial guess $\hat{\mathbf{x}}^{(t)}_r$ is also spread over the structure with the detail that the same information is accessible in the buses. Second, in the pooling stage, the leaf nodes $l_m$'s also need to backpropagate their $\hat{\mathbf{x}}^{(t)}_m$'s to the root node $r$. But now, their local estimates are summed up together at the node common bus $s_i$ considering the weights $g(s_i,l_m)$ -- the sum of the weights is equal to 1. This is followed by a weighted final sum between the sub-array nodes $s_i$ so as to send the final information to the root node $r$. The benefit of tree-based architectures is that they allow more processing to be performed in parallel instead of the full sequential mode provided by the daisy-chain. This comes with the expectation of further reducing the delay to output the final estimate $\hat{\mathbf{x}}$. Future work can better characterize the operation and the trade-offs of more complex tree structures, comparing them with the daisy-chain of Fig. \ref{fig:daisy-chain}.
\begin{figure}[htbp!]
    \centering
    \includegraphics[trim=0 0 0.0cm 0, clip, width=0.8\columnwidth]{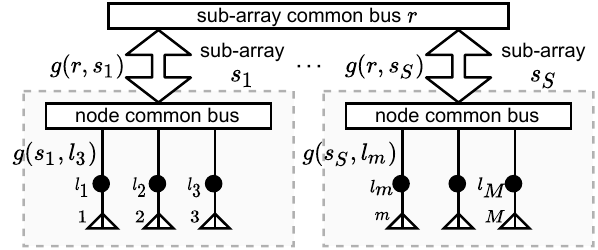}
    \caption{Decentralized baseband processing architecture at the BS including the concept of sub-arrays. \label{fig:coarse-tree}}
\end{figure}

%============================
\section{Numerical Results}\label{sec:num-res}
%============================
We now present some simulation results showing the efficiency of the BDK receiver and our method to choose the relaxation parameter for the SDK receiver. We use the bit-error rate (BER) as a measure of performance and drawn the data signal vector $\mathbf{x}$ according to a normalized 16-QAM constellation with natural binary-coded ordering.

\subsection{Stationary Case}
The stationary case comprises of all antennas serving all UEs, that is, $\mathbf{D}_m=\mathbf{I}_K, \ \forall m\in\mathcal{M}$. Fig. \ref{fig:mmimo:ber-vs-snr} shows the average BER per UE as a function of the SNR in dB. The BDK receiver has a slightly better performance than the SDK receiver, especially in low SNR where noise knowledge is critical. However, the convergence rate is far from good and an appropriate choice of the relaxation parameter $\lambda$ seems to be very important to accelerate the convergence of the SDK receiver. Our heuristic method to select $\lambda$ defined in \eqref{eq:lambda:sdk:proposed} is better than the one proposed in \cite{Sanchez2020}, just being worse for high SNR values. This is expected since in the case that the SNR is high, the expression in \eqref{eq:lambda:sdk:proposed} typically evaluates to $\lambda=1$ because of the minimum function.\footnote{One way to deal with this problem is to further limit the range $0<\lambda<1/2$, which was observed to work better in high SNR. However, convergence is naturally improved in the case of high SNR after a few cycles because the noise error does not affect the iterations much.} However, the advantage of our method can be better seen in Fig. \ref{fig:mmimo:ber-vs-cycles} that shows the average BER per UE as a function of the total number of cycles $T$. The figure uncovers the main drawback of the previous method of choosing $\lambda$ proposed in \cite{Sanchez2020}: after $T=1$ cycles the SINR expression changes, which in turn changes the optimal $\lambda$ value; hence, by increasing $T$, the constant relaxation parameter chosen before is no longer optimal and it scales future estimates erroneously. Meanwhile, our method provides improved performance until $T=4$ iterations; for $ T>4$ the error propagation dominates, what is known as the semi-convergence property of the Kaczmarz algorithm \cite{Elfving2014}.

\begin{figure}[thp]
    \centering
    \subfloat[\textit{vs} SNR in dB with $T=1$.]{\includegraphics[width=0.49\columnwidth]{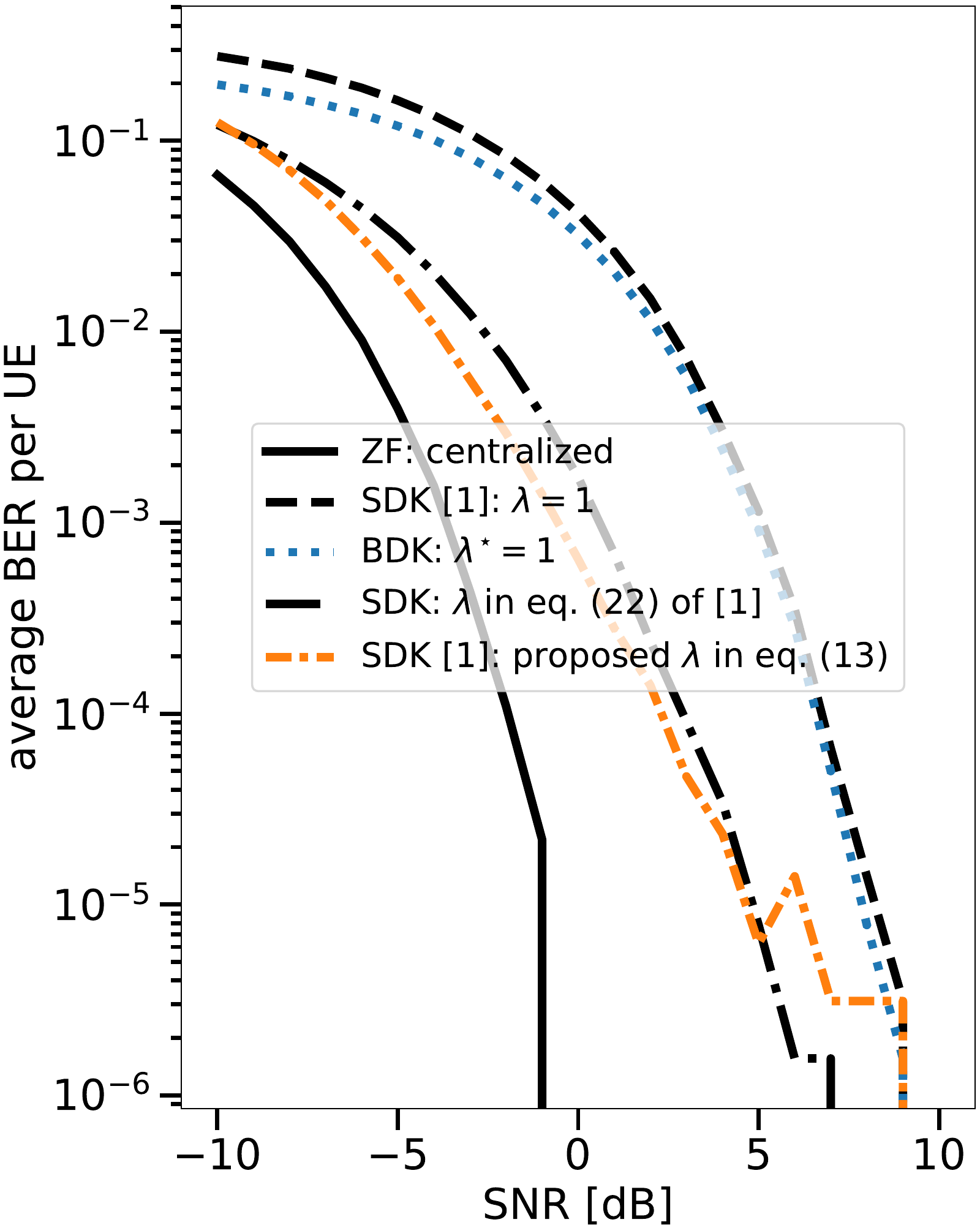}\label{fig:mmimo:ber-vs-snr}}%
    \,\hspace{-2mm}
    \subfloat[\textit{vs} $T$ with $\mathrm{SNR}=0$ dB.]{\includegraphics[width=0.49\columnwidth]{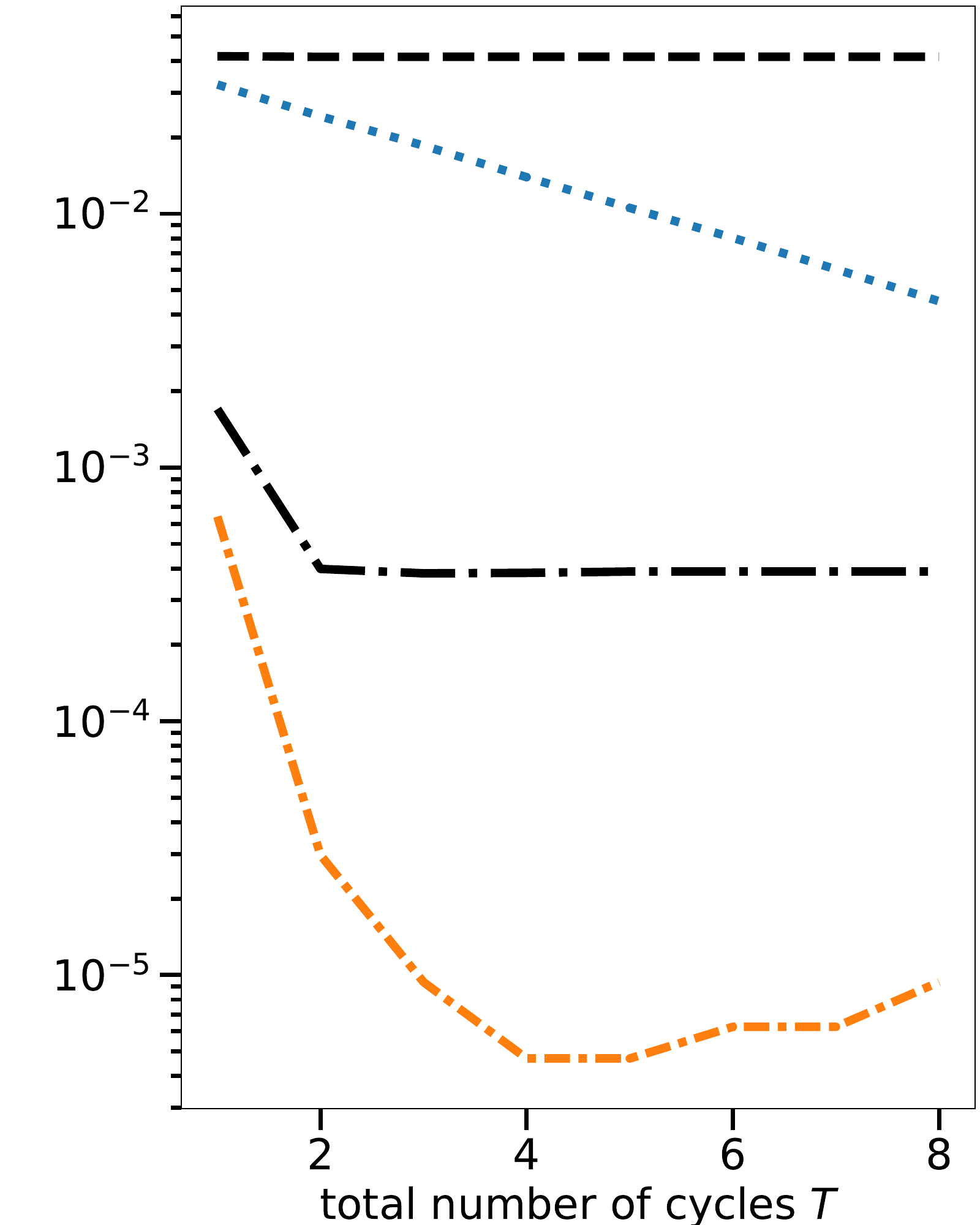}\label{fig:mmimo:ber-vs-cycles}}
    \caption{Average BER per UE with $M=128$ antennas, $K=16$ UEs.}%
\end{figure}

\subsection{Non-Stationary Case}
To generate the diagonals of the matrix $\mathbf{D}_m$, we adopted a simple model defined by the following step: for each antenna $m\in\mathcal{M}$, we successively generate a random sequence of length $K$ in which each random variable follows a Bernoulli distribution with equal probability until the sum of the random sequence is equal to the number of effective users $D$. The average BER per UE as a function of $D$ is shown in Fig. \ref{fig:xlmimo:ber-vs-D}. For very small values of $D$, we note that the SRC-based methods perform bad due to the strong sparsity condition imposed by the spatial non-stationarities that mathematically hinders the solution of the SLE. For larger values of $D$, the SDK receiver with a suitable choice of $\lambda$ starts to perform better in general and improves as $D$ comes closer to $K=16$ UEs. This happens because increasing $D$ the rank-deficiency of the channel matrix $\mathbf{H}$ is reduced; when $D=K$, $\mathbf{H}$ is full-rank \cite{Ali2019} and ZF performs better even with increasing interference. In Fig. \ref{fig:xlmimo:ber-vs-cycles}, we can note that our method of choosing $\lambda$ works better than the previous method under sparse $\mathbf{h}_m$. However, the curve converges to a floor because the method generates a scaled final estimate $\boldsymbol{\alpha}\hat{\mathbf{x}}$, where $\boldsymbol{\alpha}\in\mathbb{C}^{K\times1}\neq{[1, 1, \dots, 1]}$ embraces two facts introduced by the sparse $\mathbf{h}_m$: a) the $m$-th estimate $\hat{\mathbf{x}}_m$ does not necessarily suffer from $m-1$ independent noise sources, going against (O.2); b) the Kaczmarz iterative step non-coherently combines the estimates from one iteration to another, and it converges to a \emph{weighted} minimum norm LS solution (see Remark 1).

\begin{figure}[thp]
    \centering
    \subfloat[\textit{vs} $D$ with $\mathrm{SNR}=0$ dB, $T=1$.]{\includegraphics[width=0.49\columnwidth]{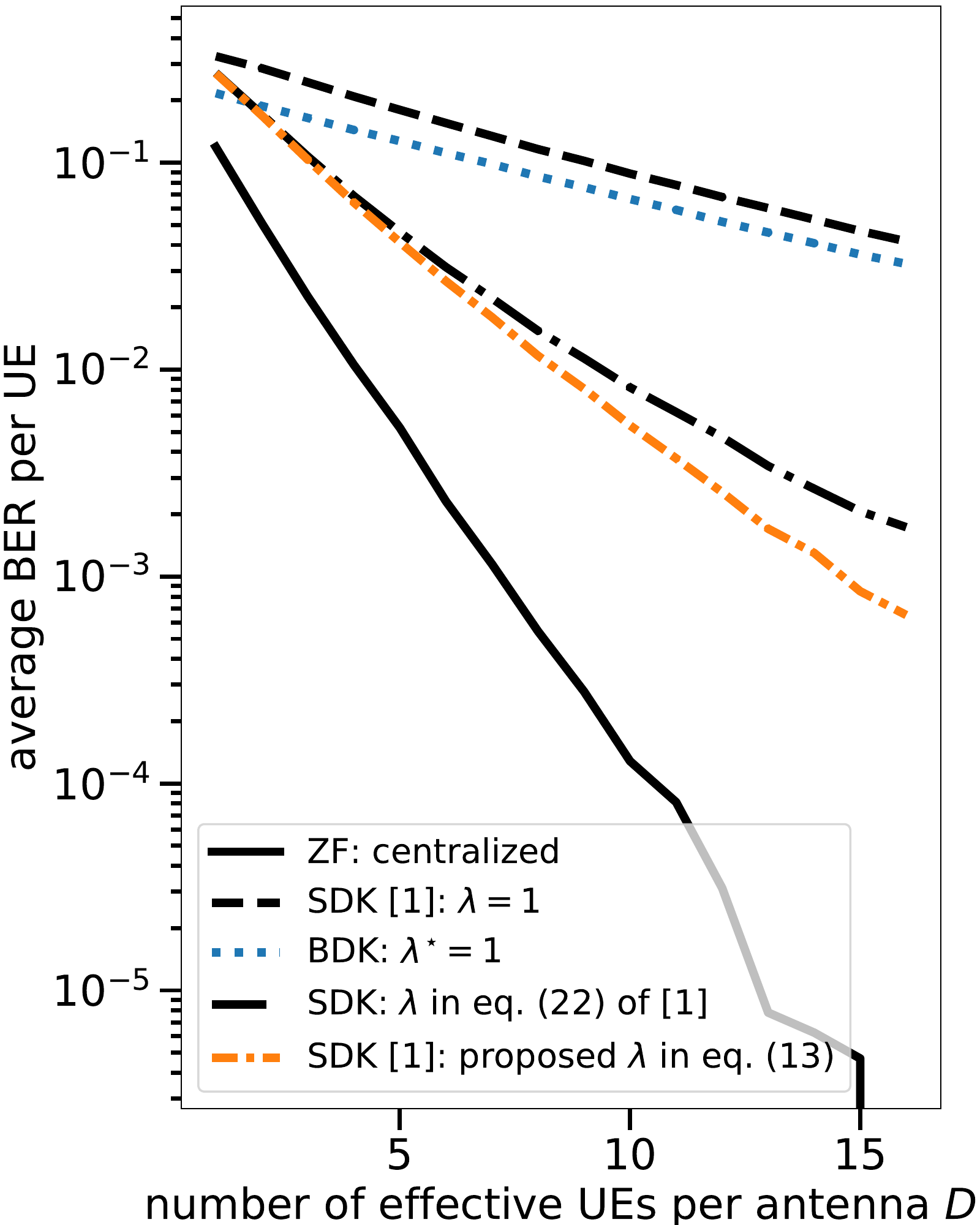}\label{fig:xlmimo:ber-vs-D}}%
    \,\hspace{-2mm}
    \subfloat[\textit{vs} $T$ with $\mathrm{SNR}=0$ dB, $D=8$.]{\includegraphics[width=0.49\columnwidth]{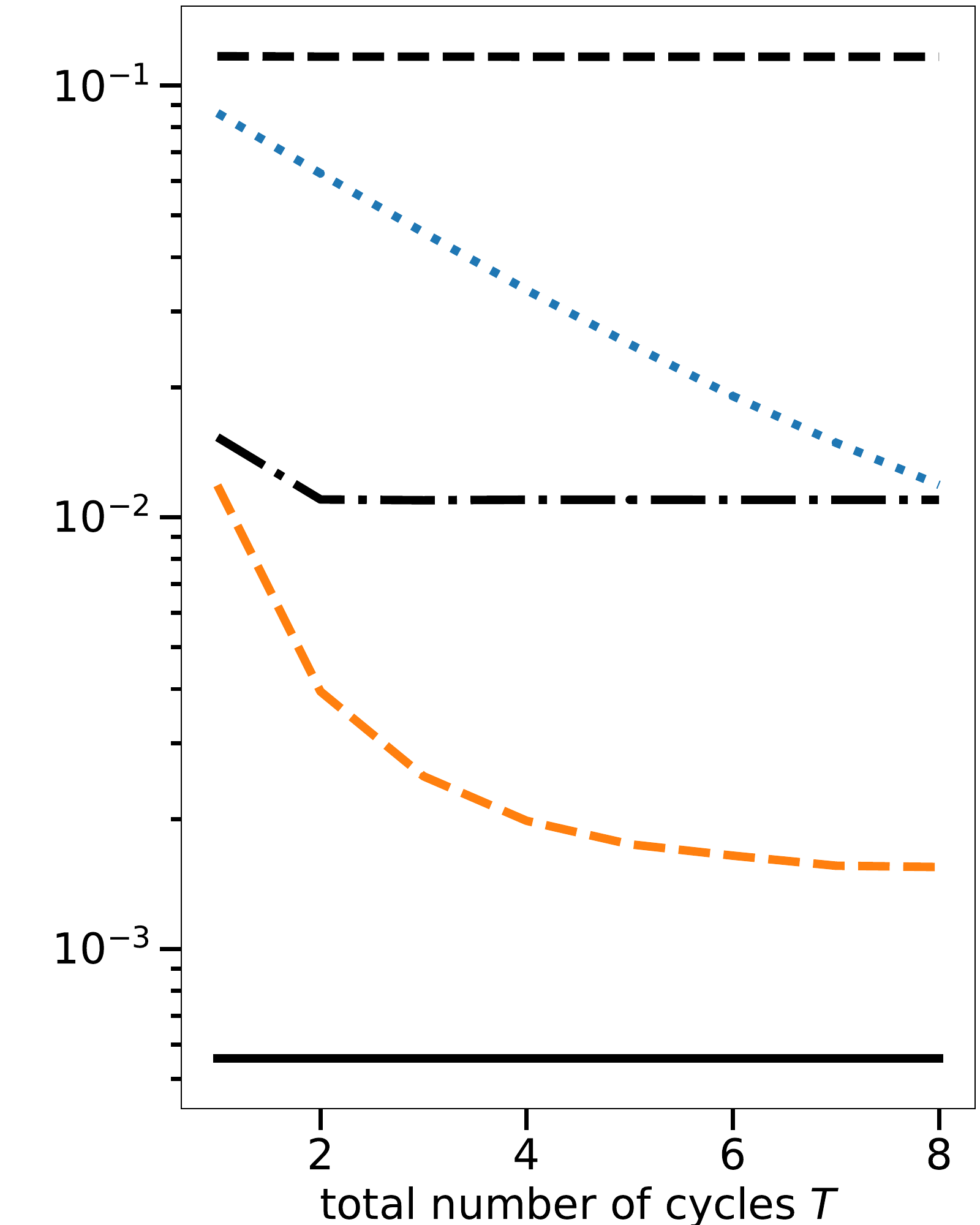}\label{fig:xlmimo:ber-vs-cycles}}
    \caption{Average BER per UE with $M=128$ antennas, $K=16$ UEs.}%
\end{figure}

%============================
\section{Conclusions}\label{sec:conclusion}
%============================
In this paper, we have shown two new approaches to derive the GD receiver proposed by \cite{Sanchez2020} based on a distributed version of the Kaczmarz algorithm and the principles of SRC. This part of the work was motivated by trying to better understand the GD receiver from \cite{Sanchez2020} and to draw new connections to help improve its initial convergence. Based on the connections made, we have proposed a new Bayesian distributed receiver and a new method to choose the relaxation parameter $\lambda$. Efforts were dedicated to improve the functionality of the discussed receivers under both spatially stationary and non-stationary channels.

% %============================
% \section*{Acknowledgment}
% %============================

% *** REFERENCES ***

\bibliographystyle{IEEEtran}
\bibliography{main.bbl}

% Generated by IEEEtran.bst, version: 1.14 (2015/08/26)
\begin{thebibliography}{10}
\providecommand{\url}[1]{#1}
\csname url@samestyle\endcsname
\providecommand{\newblock}{\relax}
\providecommand{\bibinfo}[2]{#2}
\providecommand{\BIBentrySTDinterwordspacing}{\spaceskip=0pt\relax}
\providecommand{\BIBentryALTinterwordstretchfactor}{4}
\providecommand{\BIBentryALTinterwordspacing}{\spaceskip=\fontdimen2\font plus
\BIBentryALTinterwordstretchfactor\fontdimen3\font minus
  \fontdimen4\font\relax}
\providecommand{\BIBforeignlanguage}[2]{{%
\expandafter\ifx\csname l@#1\endcsname\relax
\typeout{** WARNING: IEEEtran.bst: No hyphenation pattern has been}%
\typeout{** loaded for the language `#1'. Using the pattern for}%
\typeout{** the default language instead.}%
\else
\language=\csname l@#1\endcsname
\fi
#2}}
\providecommand{\BIBdecl}{\relax}
\BIBdecl

\bibitem{Sanchez2020}
J.~{Rodríguez Sánchez}, F.~{Rusek}, O.~{Edfors}, M.~{Sarajlić}, and
  L.~{Liu}, ``{Decentralized Massive MIMO Processing Exploring Daisy-Chain
  Architecture and Recursive Algorithms},'' \emph{IEEE Transactions on Signal
  Processing}, vol.~68, pp. 687--700, 2020.

\bibitem{Shepard2012}
C.~Shepard, H.~Yu, N.~Anand, E.~Li, T.~Marzetta, R.~Yang, and L.~Zhong,
  ``{Argos: Practical Many-Antenna Base Stations},'' in \emph{Proceedings of
  the 18th Annual International Conference on Mobile Computing and Networking},
  ser. Mobicom '12.\hskip 1em plus 0.5em minus 0.4em\relax New York, NY, USA:
  Association for Computing Machinery, 2012, p. 53–64.

\bibitem{Carvalho2020}
E.~D. {Carvalho}, A.~{Ali}, A.~{Amiri}, M.~{Angjelichinoski}, and R.~W.
  {Heath}, ``{Non-Stationarities in Extra-Large-Scale Massive MIMO},''
  \emph{IEEE Wireless Communications}, vol.~27, no.~4, pp. 74--80, 2020.

\bibitem{Li2019}
K.~Li, J.~McNaney, C.~Tarver, O.~Castañeda, C.~Jeon, J.~R. Cavallaro, and
  C.~Studer, ``{Design Trade-offs for Decentralized Baseband Processing in
  Massive MU-MIMO Systems},'' in \emph{2019 53rd Asilomar Conference on
  Signals, Systems, and Computers}, 2019, pp. 906--912.

\bibitem{Jeon2019}
C.~Jeon, K.~Li, J.~R. Cavallaro, and C.~Studer, ``{Decentralized Equalization
  With Feedforward Architectures for Massive MU-MIMO},'' \emph{IEEE
  Transactions on Signal Processing}, vol.~67, no.~17, pp. 4418--4432, 2019.

\bibitem{Li2017}
K.~Li, R.~R. Sharan, Y.~Chen, T.~Goldstein, J.~R. Cavallaro, and C.~Studer,
  ``{Decentralized Baseband Processing for Massive MU-MIMO Systems},''
  \emph{IEEE Journal on Emerging and Selected Topics in Circuits and Systems},
  vol.~7, no.~4, pp. 491--507, 2017.

\bibitem{Amiri2021}
A.~Amiri, C.~N. Manch'on, and E.~de~Carvalho, ``{Uncoordinated and
  Decentralized Processing in Extra-Large MIMO Arrays},'' 2021.

\bibitem{Hegde2019}
C.~Hegde, F.~Keinert, and E.~S. Weber, ``{A Kaczmarz Algorithm for Solving Tree
  Based Distributed Systems of Equations},'' 2019.

\bibitem{Verdu1998}
S.~Verdú, \emph{{Multiuser Detection}}, 1st~ed.\hskip 1em plus 0.5em minus
  0.4em\relax USA: Cambridge University Press, 1998.

\bibitem{Carvalho2012}
T.~Brown, E.~{De Carvalho}, and P.~Kyritsi,
  \emph{\BIBforeignlanguage{English}{{Practical Guide to the MIMO Radio Channel
  with MATLAB Examples}}}, 1st~ed.\hskip 1em plus 0.5em minus 0.4em\relax
  Wiley, 2012.

\bibitem{Bjornson2017c}
E.~Bj{\"{o}}rnson, J.~Hoydis, and L.~Sanguinetti, ``{Massive MIMO Networks:
  Spectral, Energy, and Hardware Efficiency},'' \emph{Foundations and
  Trends{\textregistered} in Signal Processing}, vol.~11, no. 3-4, pp.
  154--655, 2017.

\bibitem{Ali2019}
A.~{Ali}, E.~D. {Carvalho}, and R.~W. {Heath}, ``{Linear Receivers in
  Non-Stationary Massive MIMO Channels With Visibility Regions},'' \emph{IEEE
  Wireless Communications Letters}, vol.~8, no.~3, pp. 885--888, 2019.

\bibitem{Kaczmarz1937}
S.~Kaczmarz, ``{Angen{\"{a}}herte Aufl{\"{o}}sung von Systemen linearer
  Gleichungen},'' \emph{Bulletin International de l'Acad{\'{e}}mie Polonaise
  des Sciences et des Lettres. Classe des Sciences Math{\'{e}}matiques et
  Naturelles. S{\'{e}}rie A, Sciences Math{\'{e}}matiques}, vol.~35, pp.
  355--357, 1937.

\bibitem{Censor1983}
Y.~Censor, P.~P.~B. Eggermont, and D.~Gordon, ``{Strong Underrelaxation in
  Kaczmarz's Method for Inconsistent Systems},'' \emph{Numerische Mathematik},
  vol.~41, no.~1, pp. 83--92, {Feb} 1983.

\bibitem{Herman2009}
G.~T. Herman, \emph{{Fundamentals of Computerized Tomography}}, ser. Advances
  in Pattern Recognition.\hskip 1em plus 0.5em minus 0.4em\relax London:
  Springer London, 2009.

\bibitem{Strohmer2009}
T.~Strohmer and R.~Vershynin, ``{Comments on the Randomized Kaczmarz Method},''
  \emph{Journal of Fourier Analysis and Applications}, vol.~15, no.~4, pp.
  437--440, 2009.

\bibitem{Bentrcia2019}
A.~Bentrcia, ``{Kaczmarz Successive Interference Cancellation: A Matrix
  Algebraic Approach},'' \emph{Digital Signal Processing}, vol.~89, pp. 60--69,
  2019.

\bibitem{Shai2014}
S.~Shalev-Shwartz and S.~Ben-David, \emph{{Understanding Machine Learning -
  From Theory to Algorithms}}.\hskip 1em plus 0.5em minus 0.4em\relax Cambridge
  University Press, 2014.

\bibitem{Hanke1990}
M.~Hanke and W.~Niethammer, ``{On the Acceleration of Kaczmarz's Method for
  Inconsistent Linear Systems},'' \emph{Linear Algebra and its Applications},
  vol. 130, pp. 83--98, 1990.

\bibitem{Elfving2014}
T.~Elfving, P.~C. Hansen, and T.~Nikazad, ``{Semi-Convergence Properties of
  Kaczmarz's Method},'' \emph{Inverse Problems}, vol.~30, no.~5, p. 055007,
  {May} 2014.

\end{thebibliography}

\appendix[Proof of Theorem \ref{theorem:01}]
Let $\mathbf{x}$ be the solution of \eqref{eq:unconstrained-least-squares} and $\boldsymbol{\varrho}^{(t)}_m=\mathbf{h}^{*}_m{r}^{(t)}_m$. We then re-define the Kaczmarz iterative step in \eqref{eq:kaczmarz-iterative-step} as $\hat{\mathbf{x}}^{(t)}_m=\hat{\mathbf{x}}^{(t)}_{m-1}+\lambda_m\boldsymbol{\varrho}^{(t)}_m$. By defining the error at the Kaczmarz iterative step applied to the $m$-th node at the first cycle\footnote{It is straightforward to generalized Theorem \ref{theorem:01} for other number of cycles, one just needs to use the solution from the previous cycle as the initialization of the next $\hat{\mathbf{x}}^{(t+1)}_0=\hat{\mathbf{x}}^{(t)}_M$.} ($t=1$) results: 
\begin{IEEEeqnarray}{rCl}
    {\hat{\mathbf{x}}^{(1)}_m-\mathbf{x}}&=&(\hat{\mathbf{x}}^{(1)}_{m-1}-\mathbf{x}) + \lambda_m\boldsymbol{\varrho}^{(1)}_m\IEEEnonumber\\
    \ltwonorm{{\hat{\mathbf{x}}^{(1)}_m-\mathbf{x}}}&=&\ltwonorm{\hat{\mathbf{x}}^{(1)}_{m-1}-\mathbf{x}}+2\lambda_m\mathcal{R}(\langle\hat{\mathbf{x}}^{(1)}_{m-1}-\mathbf{x},\boldsymbol{\varrho}^{(1)}_m\rangle)\IEEEnonumber\\
    &+&\lambda_m^2\ltwonorm{\boldsymbol{\varrho}^{(1)}_m}\IEEEnonumber\\
    \mathcal{R}(\langle\hat{\mathbf{x}}^{(1)}_{m-1}-\mathbf{x},\boldsymbol{\varrho}^{(1)}_m\rangle)&=&\dfrac{1}{2\lambda_m}(\ltwonorm{{\hat{\mathbf{x}}^{(1)}_m-\mathbf{x}}}-\ltwonorm{{\hat{\mathbf{x}}^{(1)}_{m-1}-\mathbf{x}}})\IEEEnonumber\\ 
    &-&\dfrac{\lambda_m}{2}\ltwonorm{\boldsymbol{\varrho}^{(1)}_m}\IEEEnonumber.
    %\label{eq:theorem:sse:02}
\end{IEEEeqnarray}
Summing the above equality over $m\in\mathcal{M}$ yields in% and considering that the SDK algorithm is initialized with a vector of zeros
\begin{IEEEeqnarray}{rCl}
    \sum_{m=1}^{M}\mathcal{R}(\langle\hat{\mathbf{x}}^{(1)}_{m-1}-\mathbf{x},\boldsymbol{\varrho}^{(1)}_m\rangle)&=&\sum_{m=1}^{M}\dfrac{1}{2\lambda_m}(\ltwonorm{{\hat{\mathbf{x}}^{(1)}_m-\mathbf{x}}}\IEEEnonumber\\
    &-&\ltwonorm{{\hat{\mathbf{x}}^{(1)}_{m-1}-\mathbf{x}}}) - \sum_{m=1}^{M}\dfrac{\lambda_m}{2}\ltwonorm{\boldsymbol{\varrho}^{(1)}_m}\IEEEnonumber\\
    &\stackrel{\text{(a)}}{=}&\dfrac{1}{2\lambda_M}\ltwonorm{{\hat{\mathbf{x}}^{(1)}_M-\mathbf{x}}}-\dfrac{1}{2\lambda_1}\ltwonorm{{\hat{\mathbf{x}}^{(1)}_0-\mathbf{x}}}\IEEEnonumber\\ 
    &-&\sum_{m=1}^{M}\dfrac{\lambda_m}{2}\ltwonorm{\boldsymbol{\varrho}^{(1)}_m}\IEEEnonumber\\ 
    &\stackrel{\text{(b)}}{\geq}&-\dfrac{1}{2\lambda_1}\ltwonorm{\mathbf{x}}-\sum_{m=1}^{M}\dfrac{\lambda_m}{2}\ltwonorm{\boldsymbol{\varrho}^{(1)}_m}\IEEEnonumber, 
    %\label{eq:theorem:sse:02}
\end{IEEEeqnarray}
where in (a) we use the fact that the first sum on the right-hand side is a telescopic sum, and in (b) we have lower bounded the left-hand side and assumed that $\hat{\mathbf{x}}^{(1)}_0=\mathbf{0}_{K\times1}$. Eq. \eqref{eq:theorem:sse:02} then follows by taking the expectation of the above expression with respect to the data signal vector $\mathbf{x}$ and the noise vector $\mathbf{n}$ conditioned on the perfect CSI knowledge denoted by the set $\{\mathbf{h}_m\}, \ \forall m\in\mathcal{M}$. Then, lower bounding $\mathbb{E}[\ltwonorm{\mathbf{x}}|\{\mathbf{h}_m\}]$ by $B^2$, $\mathbb{E}[\ltwonorm{\boldsymbol{\varrho}^{(t)}_m}|\{\mathbf{h}_m\}]$ by $\rho^2$ for every $m\in\mathcal{M}$, and defining $\lambda_m$ to incorporate $B$, $\rho$, $\sqrt{m}$, $\sqrt{t}$ in addition to $\ltwonorm{\mathbf{h}_{m}}$, which comes from the definition of the Kaczmarz iterative step in \eqref{eq:kaczmarz-iterative-step}, leads to \eqref{eq:theorem:sse:03}. It is straightforward to obtain the loose lower bound $B^2=Kp$ from $\mathbf{x}\sim\mathcal{CN}(\mathbf{0},p\mathbf{I}_K)$. For $\rho$, we consider the worst-case that the residual in $\boldsymbol{\varrho}^{(t)}_m$ is pure noise and has the form $n_m\sim\mathcal{CN}(0,\sigma^2)$, then $\rho^2=\sigma^2$; the channel is normalized out to set $B^2$ and $\rho^2$ on the same power scale. 

% *** BIBLIOGRAPHIES ***

\end{document}